\DeclareMathOperator{\sinc}{sinc}
\newcommand\numberthis{\addtocounter{equation}{1}\tag{\theequation}}
\newtheorem{theorem}{Theorem}
\def\BibTeX{{\rm B\kern-.05em{\sc i\kern-.025em b}\kern-.08em
    T\kern-.1667em\lower.7ex\hbox{E}\kern-.125emX}}
\begin{document}

\title{Maximum Eigenvalue Detection based Spectrum Sensing in RIS-aided System with Correlated Fading\vspace{-1.1ex}}

\author{\IEEEauthorblockN{Nikhilsingh Parihar}
\IEEEauthorblockA{
\textit{IIIT-Hyderabad}\\
Hyderabad, India \\
nikhilsingh.parihar@research.iiit.ac.in}
\vspace*{-1.5cm}
\and
\IEEEauthorblockN{Praful D. Mankar}
\IEEEauthorblockA{
\textit{IIIT-Hyderabad}\\
Hyderabad, India \\
praful.mankar@iiit.ac.in}
\vspace*{-1.5cm}
\and

\IEEEauthorblockN{Sachin Chaudhari}
\IEEEauthorblockA{
\textit{IIIT-Hyderabad}\\
Hyderabad, India \\
sachin.c@iiit.ac.in}\\
\vspace*{-1.5cm}

}
\maketitle

\begin{abstract}
Robust spectrum sensing is crucial for facilitating opportunistic spectrum utilization for secondary users (SU) in the absence of primary users (PU). However, propagation environment factors such as multi-path fading, shadowing, and lack of line of sight (LoS) often adversely affect detection performance. To deal with these issues, this paper focuses on utilizing reconfigurable intelligent surfaces (RIS) to improve spectrum sensing in the scenario wherein both the multi-path fading and noise are correlated. In particular, to leverage the spatially correlated fading, we propose to use maximum eigenvalue detection (MED) for spectrum sensing. We first derive exact distributions of test statistics, i.e., the largest eigenvalue of the sample covariance matrix, observed under the null and signal present hypothesis. Next, utilizing these results, we present the exact closed-form expressions for the false alarm and detection probabilities. In addition, we also optimally configure the phase shift matrix of RIS such that the mean of the test statistics is maximized, thus improving the detection performance. Our numerical analysis demonstrates that the MED's receiving operating characteristic (ROC) curve improves with increased RIS elements, SNR, and the utilization of statistically optimal configured RIS.
\end{abstract}
\begin{IEEEkeywords}
Reconfigurable Intelligent Surfaces, Spectrum Sensing, Maximum Eigenvalue Detector, Correlated Fading, etc. 
\end{IEEEkeywords} 

\vspace*{-2.5ex}

\section{Introduction}
\vspace*{-0.5ex}
Reconfigurable Intelligent Surfaces (RIS) are emerging as a viable solution for next-generation wireless communication systems due to their ability to provide reliable links, offering several advantages like extended coverage, high data rates, and enhanced sensing abilities. RIS comprises a low-cost uniform planar array (UPA) with numerous sub-wavelength-sized passive metamaterial elements. These elements can adjust their physical properties dynamically, allowing them to induce phase shifts in reflected signals. Such property enables RIS to control the wireless propagation environment between transmitter and receiver partially \cite{Qingqing_TowardsSmart,Basar,Emil_SigPro,Ozdogan}. This makes RIS attractive for reliable communication and for enabling efficient sensing and localization \cite{Emil_SigPro,Renzo_SmartRadio}. This paper aims to utilize the RIS to design a robust spectrum sensing technique to facilitate dynamic spectrum management in cognitive radio (CR) networks.
\vspace*{-1.2ex}

\subsection{Related Works}
\vspace*{-1.0ex}

Over the past two decades, significant research endeavors have been dedicated to designing efficient spectrum sensing techniques in the literature for conventional wireless communication systems. These techniques encompass energy detection (ED), maximum eigenvalue detection (MED), matched filtering (MF), and cyclostationary detection (CD).
However, in such conventional communication setups, achieving a high detection probability often necessitates a large number of observation samples, especially when the number of antennas at both the transmitter and receiver is limited and the wireless fading is severe. Consequently, this limitation restricts the opportunistic utilization of spectrum as a significant portion of the time is spent for sensing the spectrum \cite{Liang2008SensingSharingTradeoff}. 
This problem can be solved by using an RIS for spectrum sensing because of its ability to control the fading environment by providing a reliable indirect link.

In this direction, a few papers study the benefits of using RIS for spectrum sensing. 
For instance, \cite{Wu2021EnergyDetection} investigates RIS-enhanced ED for spectrum sensing. Therein, the authors employ the Gamma distribution approximation and central limit theorem to derive the closed-form expressions for the probability of detection. The authors of  \cite{Xie2023EnhancingPassiveorActiveSensing} study the impact of active and passive RIS-aided ED for spectrum sensing. The authors also analyze the
number of configurations and how many reflecting elements are needed for active and passive RIS to achieve
a detection probability close to 1. 
In \cite{Shaoe2022}, a RIS-aided weighted ED (WED) is studied for spectrum sensing wherein the IRS reflection is considered to dynamically change over time according to a designed codebook to substantially vary the received signal at the SU. The analyses presented in \cite{Xie2023EnhancingPassiveorActiveSensing} and \cite{Shaoe2022} heavily rely on the Gaussian approximation of test statistics, assuming a large number of samples are available for sensing. 

On the other hand, the authors of \cite{Jungang2022} employ MED scheme for spectrum sensing using RIS under correlated fading, wherein the largest eigenvalue of the sample covariance matrix of observed samples is used as the test statistics. 
Similarly, the authors of \cite{Sharma2014MEDCorrelatedNoise_TracyWidom} apply MED for spectrum sensing under correlated noise. 
MED performs better than ED  when the observed signal exhibits correlation  \cite{Zeng2008MED}.
However, applying MED leads to deriving the distribution of the largest eigenvalue of the sample covariance matrix that depends on the system under consideration. 
The authors of \cite{Jungang2022,Sharma2014MEDCorrelatedNoise_TracyWidom} show the resemblance of the sample covariance matrix with the single spiked model from random matrix theory and invoke the application of asymptotic distribution of largest eigenvalue from \cite{ge2021large} to obtain the detection and false alarm probabilities in the asymptotic region where the number of antennas at SU and the number of observation approaches to infinity. However, assuming the large number of antennas at the SU is impractical, and the large number of observation samples defeats the purpose of using RIS for efficient sensing, as discussed above. Moreover, as mentioned earlier, the analyses mainly rely on the large dimensional system, allowing for the convenient application of the law of large numbers or asymptotic results to make the analysis tractable. However, these analyses may merely serve as approximations for practical systems in which the number of antennas and observations is small or moderate. Motivated by this, our focus in this paper is to provide exact performance characterization of MED for RIS-aided spectrum sensing. 
\vspace*{-1.3ex}

\subsection{Contributions}
\vspace{-1.0ex}

This paper aims to design an efficient spectrum sensing for RIS-aided CR networks. In particular, we consider sensing of a single antenna PU 
at the SU equipped with multiple antennas and assisted by RIS. The SU is assumed to be capable of configuring RIS phase shifts to improve the test statistics for the detection. The main contributions of this paper are listed below.
\vspace{-1.0ex}

\begin{enumerate}
    \item This paper proposes MED  for spectrum sensing for RIS-aided CR networks in the presence of correlated fading and correlated noise.
\vspace{-0.5ex}
    
    \item The exact distribution of the largest eigenvalue of the central Wishart matrix is derived, which is then used to obtain the proposed MED's detection and false alarm probabilities.
\vspace{-0.5ex}
    \item We also present a near-optimal approach to configure RIS phase shift that maximizes the expected largest eigenvalue of the received signal sample covariance matrix.
\vspace{-0.5ex}
    \item Our numerical analysis demonstrates that increasing the number of RIS elements and configuring RIS phase shifts optimally will enhance the ROC. We also demonstrate that the detection probability of MED approaches $1$ for a moderate number of RIS elements, especially with optimally configured RIS. 
\end{enumerate}
\vspace{-0.7ex}

\section{System Model}
\vspace{-0.7ex}

This paper considers a wireless communication system comprising a single-antenna PU,  an SU  with $M$ antennas, and an RIS with $N$ elements, as shown in Fig. \ref{fig:system_model}. Considering that the RIS is deployed to enhance the spectrum sensing capability, it is reasonable to assume that the RIS is strategically placed such that it provides a strong LoS link to SU. Hence, we can model the channel array response of the RIS-SU link as  \cite{WangPeilan2020Channelarrayresponse}
\begingroup
\setlength\abovedisplayskip{0.3ex}
\setlength\belowdisplayskip{0.3ex}
\begin{align*}
    \mathbf{H}= \mathbf{a}\left(\theta, M\right)\left[\mathbf{a}\left(\cos{\phi}, N_x\right) \otimes \mathbf{a}\left(\sin{\phi}\cos{\varphi}, N_y\right)\right]^T,
\end{align*}
\endgroup
\begingroup
\setlength{\belowcaptionskip}{-10pt}
\begin{figure}[t]
  \centering\vspace{-3mm}
    \includegraphics[width=0.35\textwidth]{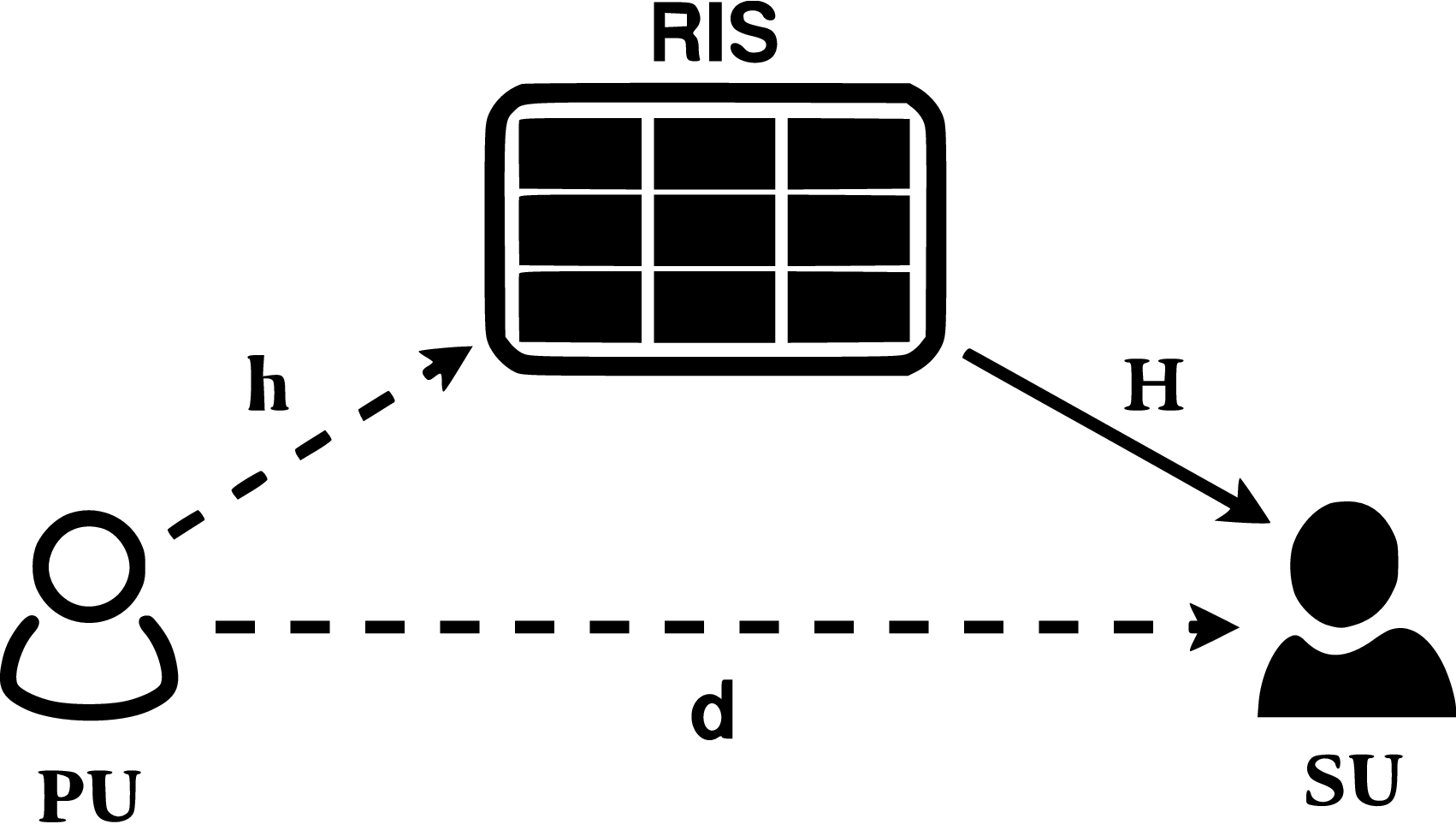}\vspace{-3mm}
  \caption{Illusration of the RIS-aided spectrum sensing system involving single antenna PU, a RIS equipped with N elements and M antenna elements at SU.}
  \vspace{-3.3ex}
  \label{fig:system_model}
\end{figure}%
\endgroup
where $\mathbf{a}\left(x,q\right)  = q^{-\frac{1}{2}}\left[1,~e^{jx}, \dots,~e^{j(q-1)x} \right]^T$
is a steering vector, $\otimes$ is the Kronecker product, $ N_x$ and $N_y$ are the number of RIS elements placed horizontally and vertically, respectively, such that $N_x \times N_y = N$. The $\phi$ and $\varphi$ are the azimuth and elevation angles associated with the RIS departure link, and $\theta$ is the angle of arrival at SU.
Further, we consider that PU-SU and PU-RIS links are independent and do not have a LoS component. To capture a generalized scenario, we assume that 1) a spatial correlation exists between the RIS/SU antenna elements and 2) the noise components at SU antennas are correlated. Thus, we model the PU-SU channel $\mathbf{d}$ and PU-RIS channel $\mathbf{h}$ using correlated Rayleigh fading such that $\mathbf{d}\sim\mathcal{CN}(0,\mathbf{R_d})$  and $\mathbf{h}\sim\mathcal{CN}(0,\mathbf{R_h})$  where $\mathbf{R_d}$ and $\mathbf{R_h}$ are the covariance matrices of $\mathbf{d}$ and $\mathbf{h}$, respectively. The covariance matrices are constructed such that their $(i,j)$-th element is given by $\mathbf{R}_{\mathbf{d}, {ij}}={\rm \sinc}\left({2 d_{ij}}/{\lambda}\right)$ and $\mathbf{R}_{\mathbf{h}, {ij}}={\rm \sinc}\left({2 r_{ij}}/{\lambda}\right)$ 
where  $\lambda$ is the operational wavelength, and $d_{ij}$ and $r_{ij}$ are the distances between $i$-th and $j$-th antennas of SU and RIS, respectively. It is worth noting that this covariance model captures the spatial correlation for UPA more accurately than a Kronecker-product-based model. For more details, please refer to \cite{Emil_CovMat}. 

The spectrum sensing problem can be formulated as a binary hypothesis testing problem between hypothesis $\mathcal{H}_0$ denotes the null hypothesis and hypothesis $\mathcal{H}_1$ denotes the presence of the PU. Thus, the received signal under these hypotheses can be expressed as
\begin{equation}
    \mathbf{y}_{k} = 
\begin{cases}
    \mathbf{w}_{k}, & \text{Under}\;\mathcal{H}_{0}\\
    \sqrt{\beta}\mathbf{d}_k \text{s}_k + \sqrt{\nu}\mathbf{H} \mathbf{\Phi} \mathbf{h}_k \text{s}_k + \mathbf{w}_{k}, & \text{Under}\;\mathcal{H}_{1}
    \label{eq_rec_sig}
\end{cases}
\end{equation}
for $k = 1,\ldots,K$, where $\text{s}_k$ corresponds to the $k$-th symbol transmitted by the PU, $\mathbf{w}_k$ is complex correlated Gaussian noise $\mathbf{w}_k\sim{\mathcal{CN}(0,\mathbf{R_w})}$ and $\mathbf{R_w}$ is the noise covariance matrix. 
In \eqref{eq_rec_sig}, $\beta = d_{o}^{-\xi}$ and $\nu = (d_{1}d_{2})^{-\xi}$ represent the path-losses along the direct and indirect links, respectively, where $d_o$, $d_1$ and $d_2$ represent the distances of the PU-SU, PU-RIS and RIS-SU links, respectively, and $\xi$ is the path-loss exponent.
The phase shift matrix of the passive RIS is denoted by $\bm{\Phi} = \text{diag}\left(\bm{\psi}\right) $ such that $\bm{\psi}^H = \left[e^{j\vartheta_{1}}, \dots , e^{j\vartheta_{N}} \right]^T$ and $\vartheta_{\rm N}$ is the phase shift provided by the $n$-th RIS element.  We assume that the PU transmission power is $P_s$, i.e., $\mathbb{E}\left[\text{s}_k \text{s}_k^{H}\right] = P_s$.
\section{Sensing using Maximum Eigenvalue Detection}
MED employs knowledge of the principal eigenvalues of sample covariance matrices of observed signals for hypothesis testing. This method exhibits superior performance compared to energy-based detection techniques, particularly in scenarios where the received signal is correlated \cite{Zeng2008MED}, as is the case considered in this paper. Thus, we employ the MED approach for spectrum sensing, assuming that both the multi-path fading and noise are correlated. The sample covariance matrix associated with the received signal can be defined as
\begin{equation}
    \hat{\mathbf{R}}_{\mathbf{y}} \triangleq \frac{1}{K}\sum\nolimits_{k=0}^{K-1} \mathbf{y}_{k} \mathbf{y}_{k}^{\textit{H}}.\label{eq:refer1}
\end{equation}
Applying MED for the detection problem given \eqref{eq_rec_sig}, the test statistic is given as
\begin{equation}
    {\rm TS} \triangleq \alpha_{\rm max},\label{eq:TS}
\end{equation}
   where $\alpha_{\max}$ is the largest eigenvalue of $\hat{\mathbf{R}}_{\mathbf{y}}$. The decision rule is
\begin{align}
    {\rm TS}\mathop{\lessgtr}_{\mathcal{H}_1}^{\mathcal{H}_0}\eta,
\end{align}
where $\eta$ is the decision threshold.
Thus, the detection and false alarm probabilities can be defined as
\begin{align*}
    \rm P_{D}(\mathbf{\eta}) &\triangleq \rm Pr(TS > \eta\, |\, \mathcal{H}_1),\\
    \text{~and~}\rm P_{FA}(\mathbf{\eta}) &\triangleq \rm Pr(TS > \eta\, |\, \mathcal{H}_0),
\end{align*}
respectively.
For the detection problem, the goal is to maximize the probability of detection $\rm P_D$ while minimizing the probability of false alarm $\rm P_{FA}$. However, maximizing $\rm P_D$ comes at the cost of raising $\rm P_{FA}$. This results in a trade-off between $\rm P_D$ and $\rm P_{FA}$ that needs to be carefully handled by appropriately determining the optimal choice of $\eta$. We will consider Neyman-Pearson's criteria \cite{kay2009fundamentals} for finding the decision threshold that maximizes the $\rm P_D$ for a given ${\rm P_{FA}}$. For this, it is necessary to characterize $\rm P_D$ and ${\rm P_{FA}}$ analytically. Determining the statistical distribution of the test statistics under both hypotheses is paramount.
For the considered system, the distributions of the received signals under both hypotheses given in \eqref{eq_rec_sig} can be determined as
\begin{align}
    \mathbf{y}_{k} &\sim 
    \begin{cases}
      \mathcal{CN}(0,\mathbf{R_w}), & \text{Under}\;\mathcal{H}_{0}, \\
      {\mathcal{CN}(0,\mathbf{R_s(\mathbf{\Phi})}+\mathbf{R_w})}, & \text{Under}\;\mathcal{H}_{1},
    \end{cases} \label{y_distribution}
\end{align}
where
\begin{equation}
    \mathbf{R_s}(\mathbf{\Phi}) = \beta P_{s}\mathbf{R_d}+\nu P_{s}\textbf{H} \bm{\Phi} \mathbf{R_h} \bm{\Phi}^H \textbf{H}^H,\label{eq:Rs}
\end{equation} 
is the covariance of the signal component for a given RIS phase shift configuration matrix $\mathbf{\Phi}$. Leveraging the benefits of RIS, it makes sense to configure it to maximize the mean value of the largest eigenvalue of the sample covariance matrix $ \hat{\mathbf{R}}_{\mathbf{y}}$ under $\mathcal{H}_1$ so that the detection probability can be improved, i.e.,
\begin{subequations}
\begin{align}
    \max_{\mathbf{\bm{\psi}}} \; &\mathbb{E}\left[\alpha_{\max}|\mathcal{H}_1\right], \label{alphamax} \\
    \text{such that}\, &\left|\mathbf{\bm{\psi}}_{i}\right| = 1, ~\text{for}~i=1,\dots,N,\label{eq:constraint}
\end{align}\label{eq:lambdamax_maximization}
\end{subequations}
where \eqref{eq:constraint} represents the unit modulus constraint of passive RIS. However, this constraint makes the problem non-convex; thus, problem \eqref{eq:lambdamax_maximization} becomes difficult to solve directly. Since $\hat{\mathbf{R}}_{\mathbf{y}}$ is a positive semi-definite,  we have $\alpha_{\max} < \text{Tr}(\hat{\mathbf{R}}_{\mathbf{y}})$, where $\text{Tr}(\cdot)$ represents trace operator. 
Thus, the mean values of $\alpha_{\max}$ under $\mathcal{H}_1$ can be bounded as
\[
\mathbb{E}\left[\alpha_{\max}|\mathcal{H}_1\right]\leq\mathbb{E}[\text{Tr}(\hat{\mathbf{R_y}})|\mathcal{H}_1]=\text{Tr}(\mathbf{R_s}(\mathbf{\Phi}))+\text{Tr}(\mathbf{R_w}).
\]
Using this identity, we aim to select $\mathbf{\Phi}$  that maximizes the trace of $\mathbf{R_s}$, i.e. maximizing the upper bound on $\mathbb{E}[\alpha_{\max}]$. 
Using \cite[Appendix A]{Taricco} and $\mathbf{\Phi}={\rm diag}(\bm{\psi})$, we can write
\begin{align*}
\text{Tr}(\mathbf{R_s(\mathbf{\Phi})})&= \beta P_{s}\text{Tr}(\mathbf{R_d}) + \nu P_{s}\text{Tr}(\mathbf{H\Phi R_h}\mathbf{\Phi}^H \mathbf{H}^H), \nonumber\\
&= \beta P_{s}\text{Tr}(\mathbf{R_d}) + \nu P_{s}\bm{\psi}^H \left((\mathbf{H}^H \mathbf{H})\odot \mathbf{R_h}\right)\bm{\psi},
\end{align*}
where $\odot$ is the Hadamard product.
Thus, the trace maximization problem can be written as
\begingroup
\setlength\abovedisplayskip{0.35pt}
\setlength\belowdisplayskip{0.35pt}
\begin{subequations}
\begin{align}
        \max_{\mathbf{\bm{\psi}}}\; &\bm{\psi}^H \left((\mathbf{H}^H \mathbf{H})\odot \mathbf{R_h}\right)\bm{\psi}, \\
        \text{s.t.}\, &\left|\mathbf{\bm{\psi}}_{i}\right| = 1.
\end{align}
\end{subequations}
\endgroup
Further, using \cite[Theorem 4]{kota2023statistically}, we can directly obtain the optimal RIS phase shift configuration as
\begingroup
\setlength\abovedisplayskip{0.1pt}
\setlength\belowdisplayskip{0.1pt}
\begin{equation}
    \bm{\psi}^\star = e^{j\vartheta}\mathbf{1}_{N}, \label{eq:optpsi}
\end{equation}
\endgroup
where $\mathbf{1}_{N}$ a $N\times1$ vector with unit entries and $\vartheta\in[0, 2\pi]$. As $\bm{\psi}^\star$ maximizes the upper bound on the mean value of $\alpha_{\max}$, it naturally becomes a sub-optimal solution for the original problem given in \eqref{eq:lambdamax_maximization}.
It is worth noting that the solution given in \eqref{eq:optpsi} provides the same phase shift to all the signal components reflecting off the surfaces of RIS elements. This reduces the operationality of  RIS to intelligent mirrors.
Thus, for the optimally configured RIS phase-shift matrix, the covariance matrix under $\mathcal{H}_1$ become
\begingroup
\setlength\abovedisplayskip{0.1pt}
\setlength\belowdisplayskip{0.1pt}
\begin{equation}
    \mathbf{R_s}(\mathbf{\Phi}^\star) = \beta P_{s}\mathbf{R_d} + \nu P_{s}\mathbf{H R_h}\mathbf{H}^H. 
\end{equation}
\endgroup
Note that the sample covariance matrix, as defined in \eqref{eq:refer1}, is random, which makes the resulting largest eigenvalue a random variable. Thus, to obtain $\rm P_D$ and $\rm P_{FA}$, it is crucial to first derive the distribution of the largest eigenvalue under both $\mathcal{H}_{0}$ and $\mathcal{H}_{1}$. From \eqref{eq:refer1} and \eqref{y_distribution}, it is clear that $\hat{\mathbf{R}}_{\mathbf{y}}$ is a central Wishart matrix. This promotes the application of random matrix theory \cite{ge2021large} to derive the distribution of the largest eigenvalue of the central Wishart matrix. On this line, huge research efforts are devoted to the literature to characterize this distribution. However, the available results mostly focus on 1) the approximate distribution of the largest eigenvalue in the asymptotic region and 2) the Wishart matrix with uncorrelated entries. For instance, \cite{ge2021large}, characterize the distribution using Tracy-Widom in the asymptotic region. An exact largest eigenvalue distribution is studied in \cite{Khatri} and then thoroughly extended in \cite{Chiani_2014} to obtain the close form results for the iid noise case. However, these results do not apply to the considered system setting as it includes a general case of correlated fading and correlated noise with a moderate number of receiving antennas.
The exact distribution of the largest eigenvalue of a central Wishart matrix is given in the following theorem
\vspace*{-1.4ex}
\begin{theorem}\label{thm:Theorem1}
    Let $\mathbf{X}$ be a $p$ $\times$ $q$ matrix whose columns are $p$-variate zero-mean complex Gaussian distributed, with covariance matrix $\mathbf{\Sigma}$, i.e., $\mathbf{X} \sim \mathcal{CN}(0, \mathbf{\Sigma})$. Say $n = \max(p,q)$ and $m = \min(p,q)$. Let $\mathbf{A} \triangleq \mathbf{XX^H}$ be the central Wishart matrix and let $\alpha_{\rm 1}, \alpha_{\rm 2}, \dots , \alpha_{\rm m}$ be the ordered eigenvalues of matrix $\mathbf{A}$. Let $\lambda_{1},\lambda_{2},\ldots,\lambda_{m}$ be the ordered eigenvalues of $\mathbf{\Sigma^{-1}}$, then the CDF of the largest eigenvalue $\alpha_m$ of $\mathbf{A}$ is given by
    \begingroup
    \setlength\abovedisplayskip{0.35pt}
    \setlength\belowdisplayskip{0.35pt}
        \begin{align}
            {\rm Pr}(\alpha_m \leq \eta) &= \frac{c}{|\mathbf{V}|} {\rm det} \left[\bm{\Lambda}(\eta,\lambda)\right],\label{eq:cdf}
        \end{align}
    \endgroup
    where $\gamma(\cdot\,,\cdot)$ is lower incomplete gamma function and $(i,j)$-th element of $\bm{\Lambda}(\eta,\lambda)$ is
    \begin{align}
        \{\bm{\Lambda}(\eta,\lambda)\}_{ij} &= \frac{1}{\lambda^{n-i+1}}\gamma(n-i+1,\eta\lambda_j),\label{thm10}\\
       \text{and~} |\mathbf{V}| &= \prod\nolimits_{i,j=1,\dots,m,i<j}^{m}(\lambda_{i}-\lambda_{j}).\label{thm10a}
        \end{align}
\end{theorem}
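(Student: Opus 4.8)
The plan is to reduce the computation of ${\rm Pr}(\alpha_m \leq \eta)$ to integrating the joint eigenvalue density of $\mathbf{A}$ over the box $[0,\eta]^m$ and then collapsing that $m$-fold integral into a single determinant. Since $\mathbf{A}=\mathbf{X}\mathbf{X}^H$ is a correlated central complex Wishart matrix with $n\ge m$ degrees of freedom and population covariance $\mathbf{\Sigma}$, its matrix density is proportional to $\det(\mathbf{A})^{n-m}\exp(-{\rm Tr}(\mathbf{\Sigma}^{-1}\mathbf{A}))$. First I would pass from this matrix density to the joint density of the ordered eigenvalues $\alpha_1<\cdots<\alpha_m$ by integrating out the eigenvector (unitary) degrees of freedom. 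Writing $\mathbf{A}=\mathbf{U}\,{\rm diag}(\alpha_1,\ldots,\alpha_m)\,\mathbf{U}^H$, the only term coupling the eigenvectors to $\mathbf{\Sigma}^{-1}$ is $\exp(-{\rm Tr}(\mathbf{\Sigma}^{-1}\mathbf{U}\,{\rm diag}(\boldsymbol\alpha)\,\mathbf{U}^H))$, and the Harish-Chandra--Itzykson--Zuber integral over $U(m)$ evaluates this to $\det[e^{-\lambda_i\alpha_j}]$ divided by the two Vandermonde products in the $\lambda$'s and the $\alpha$'s. Multiplying by the $\prod_{i<j}(\alpha_i-\alpha_j)^2$ Jacobian of the Hermitian eigendecomposition leaves a single power of the $\alpha$-Vandermonde and yields the standard determinantal form
\begin{equation*}
f(\alpha_1,\ldots,\alpha_m)=c_0\,\frac{\prod_{i<j}(\alpha_i-\alpha_j)}{\prod_{i<j}(\lambda_i-\lambda_j)}\,\prod_{i=1}^{m}\alpha_i^{\,n-m}\,\det\!\big[e^{-\lambda_i\alpha_j}\big],
\end{equation*}
valid when the $\lambda_i$ are distinct, with coincidences handled afterwards by a limiting (L'H\^opital) argument.

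Next I would write the CDF as the ordered-region integral ${\rm Pr}(\alpha_m\leq\eta)=\int_{0<\alpha_1<\cdots<\alpha_m\leq\eta}f\,d\boldsymbol\alpha$ and symmetrize so as to integrate freely over $[0,\eta]^m$ (dividing by $m!$). Expressing the $\alpha$-Vandermonde as $\det[\alpha_j^{m-i}]$ and folding $\prod_i\alpha_i^{n-m}$ into the exponential determinant as $\det[\alpha_j^{n-m}e^{-\lambda_i\alpha_j}]$, the integrand becomes a constant times a product of two $m\times m$ determinants. The workhorse is then the Andr\'eief (Gram) identity, which turns the integral over $[0,\eta]^m$ of a product of two determinants into a single determinant whose $(i,j)$ entry is the scalar integral $\int_0^\eta\alpha^{m-i}\cdot\alpha^{n-m}e^{-\lambda_j\alpha}\,d\alpha=\int_0^\eta\alpha^{\,n-i}e^{-\lambda_j\alpha}\,d\alpha$.

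The final step is to evaluate this scalar integral: the substitution $u=\lambda_j\alpha$ gives $\lambda_j^{-(n-i+1)}\int_0^{\eta\lambda_j}u^{\,n-i}e^{-u}\,du=\lambda_j^{-(n-i+1)}\gamma(n-i+1,\eta\lambda_j)$, which is precisely $\{\mathbf{\Lambda}(\eta,\lambda)\}_{ij}$. The factors of $m!$ from symmetrization and from Andr\'eief cancel, the surviving denominator $\prod_{i<j}(\lambda_i-\lambda_j)$ is exactly $|\mathbf{V}|$, and absorbing the remaining normalization ($c_0$ and $m!$) into a single constant $c$ delivers ${\rm Pr}(\alpha_m\leq\eta)=\frac{c}{|\mathbf{V}|}\det[\mathbf{\Lambda}(\eta,\lambda)]$.

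I expect the main obstacle to be the first step---producing the clean determinantal joint eigenvalue density through the unitary integration---since it hinges on the HCIZ evaluation and on the distinctness of the eigenvalues of $\mathbf{\Sigma}$ (needed so that $|\mathbf{V}|\neq0$); the degenerate case requires a separate continuity/L'H\^opital argument. By contrast, the linearization via the Andr\'eief identity and the incomplete-gamma evaluation are essentially mechanical once the determinantal density is in hand.
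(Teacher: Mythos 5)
Your proposal is correct and follows essentially the same route as the paper: the paper obtains the determinantal joint eigenvalue density by citing James (1964) for the density with the $_{0}\widetilde{F}_{0}$ matrix-argument hypergeometric and Gross (1989) for its determinantal expansion---which is exactly the HCIZ evaluation you invoke---and then applies Khatri's 1969 lemma to integrate the determinant over the ordered region, which is the ordered-region counterpart of your symmetrize-and-Andr\'eief step, before evaluating the entries as $\lambda_j^{-(n-i+1)}\gamma(n-i+1,\eta\lambda_j)$ just as you do. Your explicit flagging of the distinct-eigenvalue requirement ($|\mathbf{V}|\neq 0$) and the limiting argument for coincidences is a point the paper leaves implicit, but it does not change the method.
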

\begin{proof}
    Please refer to Appendix \ref{app:Thm1_proof} for the proof.
\end{proof}
\vspace*{-1.2ex}
Using Theorem \ref{thm:Theorem1}, we can determine the largest eigenvalue distributions of sample covariance matrices under $\mathcal{H}_0$ and $\mathcal{H}_1$, and accordingly use them to evaluate the false alarm and detection probabilities, as done in the following subsections.
\vspace*{-1.3ex}
\subsection{False alarm probability}\label{Pfa}
\vspace*{-1.0ex}
Under the null hypothesis $\mathcal{H}_0$, where the signal is not transmitted, the received signal solely comprises noise. Thus, the observed sample covariance matrix becomes
\begingroup
\setlength\abovedisplayskip{1.7ex}
\setlength\belowdisplayskip{1.7ex}
   \begin{equation}
       \hat{\mathbf{R}}_{\mathbf{y}} = \frac{1}{ K}\sum\nolimits_{ k=0}^{ K-1} \textbf{w}_{k} \textbf{w}_{k}^{\rm H}, \label{eq:noiseSC}
   \end{equation}
\endgroup
   and the covariance matrix of $\mathbf{w}_k$ is $\mathbf{R_w}$. Hence, the test statistics, as defined in \eqref{eq:TS}, becomes the largest eigenvalue of \eqref{eq:noiseSC}. Let $0 < \lambda_{1}^{\mathbf{w}} < \lambda_{2}^{\mathbf{w}} < \dots < \lambda_{m}^{\mathbf{w}}$ represent the ordered eigenvalues of $\mathbf{R}^{-1}_{\mathbf{w}}$. Using Theorem \ref{thm:Theorem1} and substituting $\bm{\Sigma}^{-1} = \mathbf{R}^{-1}_{\mathbf{w}}$, we can obtain the exact distribution of TS and  thereby get a closed-form expression for $\rm P_{FA}$ as
\begin{align}
            \rm P_{FA}(\eta) & = 1 - \frac{c}{|\mathbf{V}_{\lambda^{\mathbf{w}}}|} {\rm det} \left[\bm{\Lambda}(\eta,\lambda^{\mathbf{w}}_{j})\right],\label{eq:PFADist}
\end{align} 
where $\bm{\Lambda}(\eta,\lambda^{\mathbf{w}})$ is given in \eqref{thm10} and $|\mathbf{V}_{\lambda^\mathbf{w}}|$ can be obtained using  \eqref{thm10a} with $\lambda_i=\lambda_i^\mathbf{w}$. 
    To achieve a desired $\rm P_{FA}$, the detection threshold $\eta$ can be obtained by inverting \eqref{eq:PFADist}. We adopt a numerical approach to invert \eqref{eq:PFADist} for obtaining $\eta$.
\vspace*{-1.3ex}
\subsection{Detection Probability}
\vspace*{-1.3ex}
When a signal is present, i.e., hypothesis $\mathcal{H}_1$, the received signal consists of both the PU signal and noise components. The sample covariance matrix for this case is given as
\begingroup
\setlength\abovedisplayskip{1.7ex}
\setlength\belowdisplayskip{1.7ex}
\begin{equation}
\hat{\mathbf{R}}_{\mathbf{y}} = \frac{1}{K}\sum_{k=0}^{K-1} \beta\mathbf{d}_k\mathbf{d}_k^H + \nu\mathbf{H} \mathbf{\Phi} \mathbf{h}_k \mathbf{h}_k^H \mathbf{\Phi}^H \mathbf{H}^H  + \mathbf{w}_k\mathbf{w}^{H}_k, \label{eq:signalSC}
\end{equation}
\endgroup
where the covariance matrix of the received signal under $\mathcal{H}_1$ is $\mathbf{R_s(\bm{\Phi}^\star) + R_w}$, as defined in \eqref{y_distribution}. The test statistics is the maximum eigenvalue of \eqref{eq:signalSC}. It is to be noted that the above sample covariance matrix is observed with the optimally configured RIS phase shifts, which aim to maximize the expected value of test statistics. Now, using  Theorem \ref{thm:Theorem1}, we can  evaluate the distribution of  test statistics with $\bm{\Sigma}^{-1} = \left(\mathbf{R_s(\bm{\Phi}^\star) + R_w}\right)^{-1}$. Let $0 < \lambda_{1}^{\mathbf{y}} < \lambda_{2}^{\mathbf{y}} < \dots < \lambda_{m}^{\mathbf{y}}$ be the ordered eigenvalues of $\left(\mathbf{R_s(\bm{\Phi}^\star) + R_w}\right)^{-1}$. Hence, the distribution of the largest eigenvalue is given as
\begin{equation}
    \begin{aligned}
        \rm P_D(\eta) &= 1 - \frac{c}{|\mathbf{V}_{\lambda^{\mathbf{y}}}|} {\rm det} \left[\bm{\Lambda}(\eta,\lambda^{\mathbf{y}}_{j})\right],
    \end{aligned}
\end{equation}
where $\bm{\Lambda}(\eta,\lambda^{\mathbf{y}})$ is given in \eqref{thm10} and $|\mathbf{V}_{\lambda^\mathbf{y}}|$ can be obtained using  \eqref{thm10a} with $\lambda_i=\lambda_i^\mathbf{y}$.

\vspace*{-0.3ex}
 \section{Numerical Results}
 This section presents a numerical analysis of the proposed MED for RIS-aided spectrum sensing. In particular, we will verify the false alarm and detection probabilities via simulations. Next, we will discuss how the proposed MED's ROC curve performs in terms of the various system design parameters. For numerical analysis, we consider the number of received antenna $M = 8$, the number of RIS element $N = 32$, the distance between subsequent RIS element $d = \lambda/2$,  the number of observations $K = 10$, the transmission power of PU $P_s = 10\;$dBm,  direct link distance $d_o = 30$ m and path-loss exponent $\xi = 3$, unless we mention otherwise.

 Recall that RIS phase shift is configured according to \eqref{eq:optpsi} to maximize the mean of the observed largest eigenvalue under $\mathcal{H}_1$. The signal-to-noise ratio (SNR) of the received signal is 
\begin{align*}
    {\rm SNR}=\Upsilon\|\sqrt{\mu}\mathbf{H}\mathbf{\Phi}^\star\mathbf{h}+\mathbf{d}\|^2,
\end{align*}
where $\Upsilon=\frac{\beta P_S}{\sigma_W^2}$ is the mean SNR received per antenna on the direct link and $\mu=\frac{\nu}{\beta}$ represents the ratio of path-losses of the indirect and direct links. Without the loss of generality, we will consider that $d_o = \kappa d_1d_2$ where $\kappa \in [0,1]$ and thus model the path-loss ratio as $\mu = \kappa^{\xi}$ to capture the effect of the relative difference in path-losses observed by the direct and indirect links.  Further, we model the noise covariance matrix as 
$\mathbf{R_{w,}}_{ij} = \rho^{\left|i-j\right|}\sigma_{W}^{2}$ such that $\sigma_{W}^2$ is the noise variance and $\rho$ is the correlation coefficient of $i$-th and $j$-th noise components. We consider $\kappa=\frac{1}{3}$ and the correlation factor $\rho = 0.2$.
\begingroup
\setlength{\belowcaptionskip}{-10pt}
\begin{figure}[t!]
  \centering\vspace{-3mm}
    \includegraphics[width=0.35\textwidth]{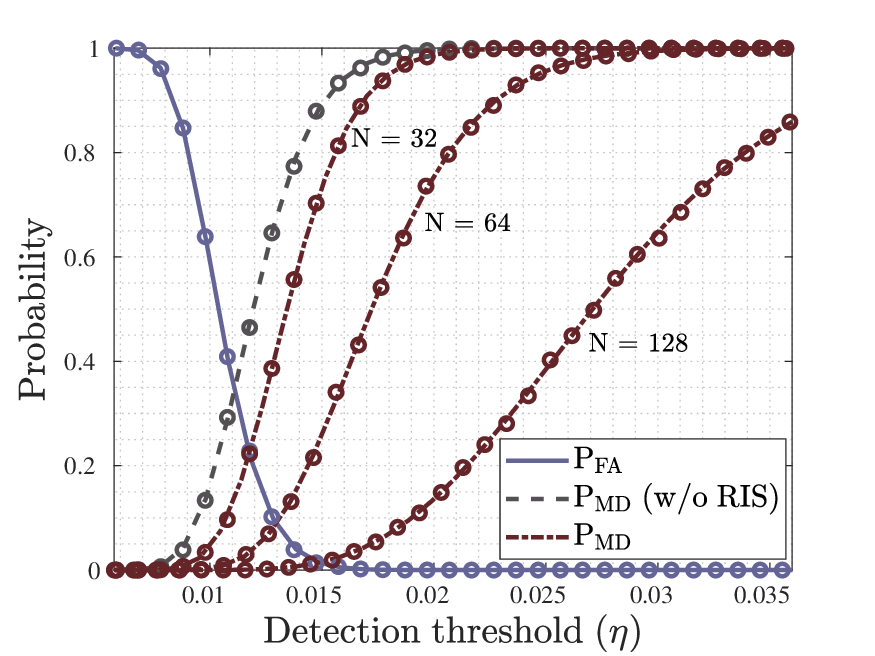}\vspace{-3mm}
  \caption{Probabilities of false alarm $\rm P_{FA}$ and missed detection $\rm P_{MD} = 1-P_D$ vs. threshold $\eta$. The lines indicate the analytical curves, whereas the circular markers represent the simulated results.}
  \vspace{-3.3ex}
  \label{fig:P_vs_threshold}
\end{figure}
\endgroup
Fig. \ref{fig:P_vs_threshold} verifies the derived false alarm and miss detection probabilities using simulation results for $\Upsilon=-10\, \rm{dB}$. The simulation results (denoted by markers) match exactly with the analytical results derived for the ${\rm P_{FA}}$ and ${\rm P_{MD}}$ (denoted by lines). As expected, these probabilities exhibit the performance trade-off w.r.t. to the decision threshold $\eta$. The figure also shows that the missed detection probability ${\rm P_{MD}}$ performs better when RIS is utilized for the sensing. It can be seen that the  ${\rm P_{MD}}$ reduces significantly with the increase in RIS elements $N$ for a given $\eta$. Basically, by using a large dimensional RIS, it is possible to increase the {\em deflection coefficient} \cite{kay2009fundamentals} corresponding to the test statistics distributions observed under $\mathcal{H}_0$ and $\mathcal{H}_1$, which essentially will help to improve the performance of ROC curves, as it will be highlighted shortly.

Next, in Fig. \ref{fig:ROC_N_128},  we show the ROC, i.e., $\rm P_D$ vs $\rm P_{FA}$ at $\Upsilon = -10,\,-8,\,\text{and}\,-5$ dB.  The figure shows that the performance improves with an increase in $\Upsilon$, which is quite expected. However, it can be seen that the MED performance without RIS for $\Upsilon = -10$ dB is very poor. In contrast, including RIS with $N = 32$ gives a significantly higher detection probability for the same $\Upsilon$. This ensures that RIS can be useful to improve the spectrum sensing performance without necessarily increasing the transmission power or using more sophisticated/complex designs for the receiver. 
Next, Fig. \ref{fig:ROCvaryingN} shows that the ROC curve of the proposed MED improves with the increase in the number of RIS elements. In particular, it can be observed that the detection probability rises sharply with just doubling the number of RIS elements. Therefore, it is possible to achieve the detection probability close to $1$ for a very small probability of false alarm using moderately large dimensional RIS.\\

\begin{figure}[t]
  \centering\vspace{-3mm}
\includegraphics[width=0.35\textwidth]{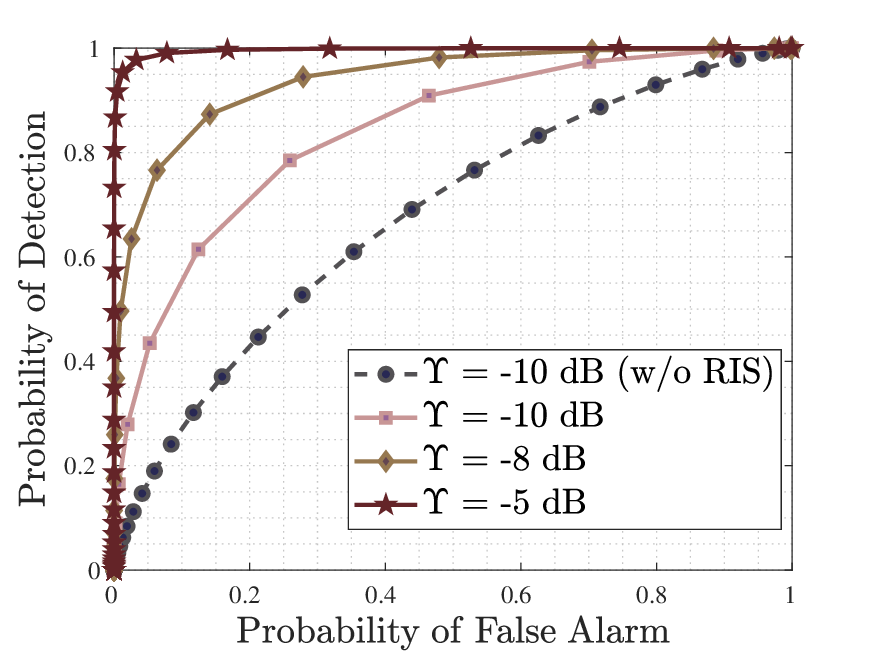}\vspace{-3mm}
  \caption{ROC: $\rm{P_D}$ vs $\rm{P_{FA}}$ for various SNR $\Upsilon$.}
  \label{fig:ROC_N_128}
\end{figure}
\vspace{-1.3ex}

\begin{figure}[t]
  \centering\vspace{-3mm}
\includegraphics[width=0.35\textwidth]{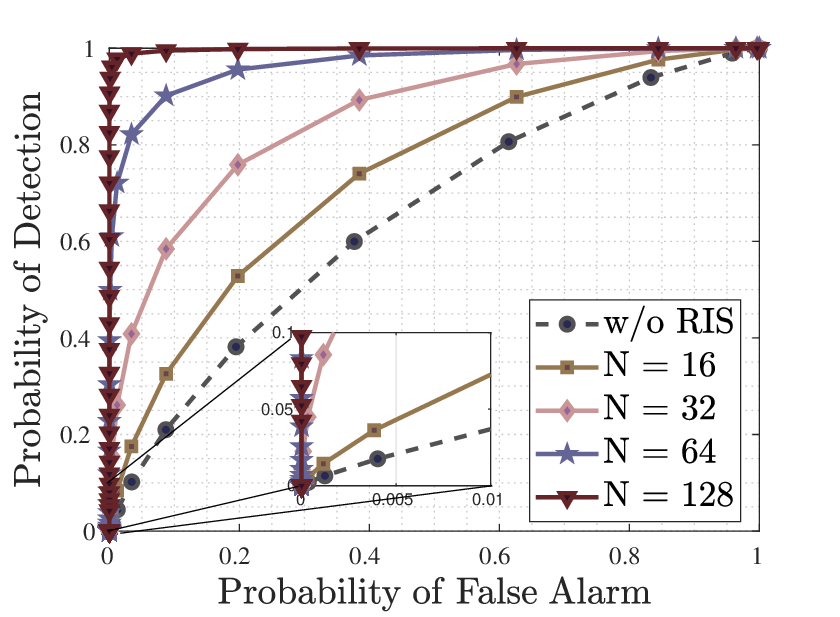}\vspace{-3mm}
  \caption{ROC: $\rm{P_D}$ vs $\rm{P_{FA}}$ for different number of RIS elements $N$.}
  \vspace{-3.3ex}
  \label{fig:ROCvaryingN}
\end{figure}\vspace{-3mm}



Fig. \ref{fig:OptPhivaryingN_OptPhivaryingsnr} shows the benefit of configuring the RIS phase shifts using the simple proposed sub-optimal solution given in \eqref{eq:optpsi} when the spacing between RIS elements is $d = \lambda/2$. Fig. \ref{fig:OptPhivaryingN_OptPhivaryingsnr} (Left) indicates the ROC performance for various values of RIS elements $N$ when $\Upsilon = -10$ dB. The figure shows that the ROC improves significantly when sub-optimally configured RIS is used, especially when $N$ is large. This performance improvement is attributed to the fact of utilizing properly configured RIS phase shifts that maximizes the trace of covariance matrix $\mathbf{R_s}$.
This also verifies that the correlated fading  aids the detection performance. 
Fig. \ref{fig:OptPhivaryingN_OptPhivaryingsnr} (Right) shows that the detection probability ${\rm P_D}$ (for ${\rm P_{FA}} = 0.1$) improves with the increase in $N$ for various values of $\Upsilon$. It can be observed that the ${\rm P_D}$ approaches $1$ for smaller values of $N$ when the operating $\Upsilon$ is larger and vice-versa. In addition, the figure also verifies the enhanced performance of ${\rm P_D}$ with sub-optimally configured RIS. However, the improvement diminishes with the increase in $\Upsilon$. This is due to the overall SNR along the direct link is sufficiently high and thus it reduces necessity of  RIS .  

\begin{figure}[t]
  \centering
\hspace{-1mm}\includegraphics[width=0.25\textwidth]{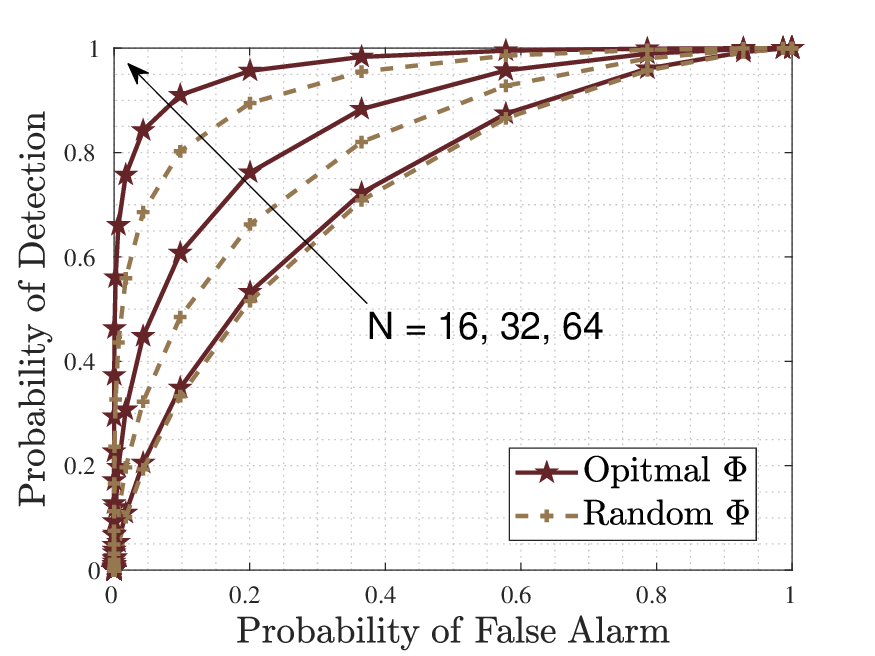}
\hspace{-4mm}\includegraphics[width=0.25\textwidth]{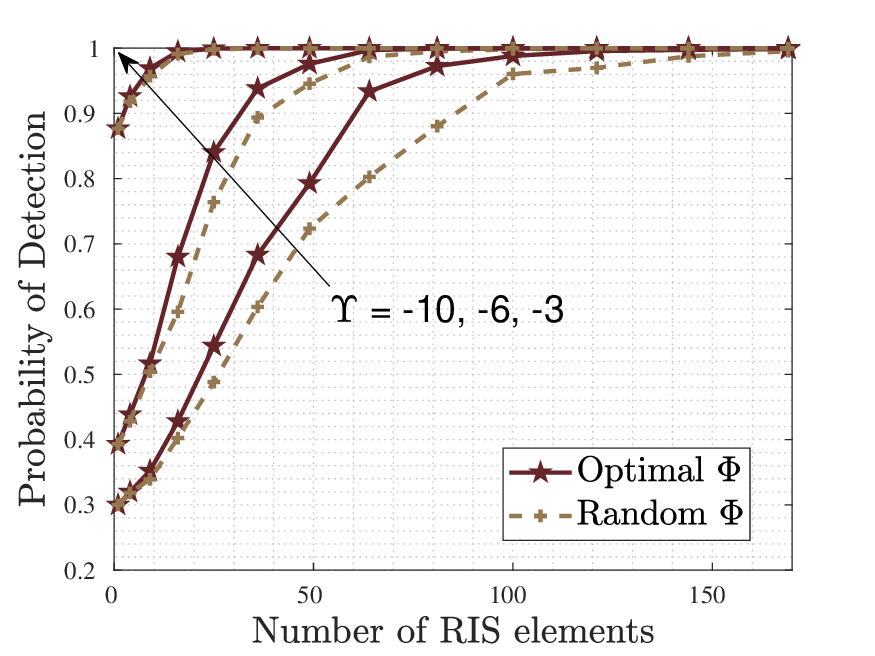}\vspace{-3mm}
  \caption{Left: ROC for different $N$. Right: $\rm{P_D}$ vs. $N$ for different $\Upsilon$.}
  \vspace{-3.3ex}
  \label{fig:OptPhivaryingN_OptPhivaryingsnr}
\end{figure}

\section{Conclusion}
In this paper, we investigate the performance of MED for RIS-aided spectrum sensing. We consider that the PU-RIS and PU-SU link experience correlated Rayleigh fading, and RIS is placed such that it has a strong LoS link with SU.
For this setup, we derived an exact distribution of the largest eigenvalue of the sample covariance matrix, which we next utilized to derive the detection and false alarm probabilities. In addition, we also configure the RIS phase shift matrix optimally to maximize the mean of the test statistics under the signal present hypothesis so that the detection probability can be improved.
It is worth noting that this paper is the first to present an exact analytical characterization of MED for RIS-aided spectrum sensing when both the multipath fading and noise are correlated.
Our numerical analysis demonstrates that the ROC curve of the MED exhibits improvement with the increase in the number of RIS elements and mean received SNR. It also shows that the detection probability approaches unity for many RIS elements, especially when the optimally configured RIS phase shifts are used for low SNR.  

\vspace{-1.3ex}
\appendix 
\label{app:Thm1_proof} 
 Let $\alpha_1, \alpha_2,\ldots, \alpha_m$ be eigenvalues of $\mathbf{A}$, and the joint density function of ($\alpha_1, \alpha_2,\ldots, \alpha_m$) is given in \cite{James1964}
    \begin{equation}
    \begin{aligned}
      f(\alpha_1, \alpha_2,\ldots, \alpha_m) &= \left|\mathbf{\Sigma}\right|^{-n}  \mathcal{_{\text{0}}\widetilde{F}_{\text{0}}}(-\mathbf{\Sigma}^{-1}, \mathbf{A}) \frac{\pi^{m(m-1)}}{\widetilde{\Gamma}_m(n)\widetilde{\Gamma}_m(m)} \\ 
      & \times |\mathbf{A}|^{n-m} \prod\nolimits_{i,j=1,i<j}^{m}(\alpha_i - \alpha_j)^2, \label{thm1}
    \end{aligned}
    \end{equation}
    where $n = \text{max}\left(p,q\right)$ and $m = \text{min}\left(p,q\right)$, $\mathcal{_{\text{0}}\widetilde{F}_{\text{0}}}(\cdot\,,\cdot)$ is the hypergeometric function with matrix argument,
    \[
        \widetilde{\Gamma}_s(t) = \pi^{s(s-1)/2} \prod\nolimits_{i=1}^{s}\Gamma(t-i+1),
    \]
    and $\Gamma(\cdot)$ is gamma function. Now, from \cite{GROSS1989}, we have the following identity
    \begin{equation}
        \mathcal{_{\text{0}}\widetilde{F}_{\text{0}}}(-\mathbf{\Sigma}^{-1}, \mathbf{A}) = \beta_m \frac{\left|\bm{\Lambda}(\alpha,\lambda)\right|}{\prod_{i,j = 1,i<j}^{m}(\lambda_i - \lambda_j)(\alpha_i - \alpha_j)}, \label{thm2}
    \end{equation}
    where $\{\bm{\Lambda}(\alpha,\lambda)\}_{ij}=$ $_{\text{0}}F_{\text{0}}(-\lambda_j\alpha_i)$, $_{\text{0}}F_{\text{0}}(\cdot)$ is hypergeometric function with scalar input, $\lambda_j$ is the $j$-th eigenvalue of $\Sigma^{-1}$ and
    \[
     \beta_m = \prod\nolimits_{a=1}^{m}(a-1)!.
    \]
    Now, substituting \eqref{thm2} in \eqref{thm1} gives us,
     \begin{equation}
        \begin{aligned}
           f(\alpha_1, \alpha_2,\ldots, \alpha_m) &= \frac{c}{|\mathbf{V}|} |\bm{\Lambda}(\alpha,\lambda)| |\mathbf{W}| \prod\nolimits_{i=1}^{m}\alpha_i^{n-m} \label{thm3}
        \end{aligned}
    \end{equation}
    \begin{align*}
        \text{where~} c &= \frac{|\mathbf{\Sigma}|^{-n}\beta_m}{\prod_{i=1}^{m}\Gamma(n-i+1)\Gamma(m-i+1)},\\
        \text{and~} |\mathbf{V}| &= \prod\nolimits_{i,j=1,\dots,m,i<j}^{m}(\lambda_{i}-\lambda_{j}).
    \end{align*}
    Note that $\mathbf{V}$ and $\mathbf{W}$ are Vandermonde matrices such that
    \begin{equation}
        \mathbf{V} = \left[\begin{matrix}
                                \lambda_{1}^{m-1} & \lambda_{1}^{m-2} & \cdots & 1 \\
                                \lambda_{2}^{m-1} & \lambda_{2}^{m-2} & \cdots & 1 \\
                                \vdots & \vdots & \ddots & \vdots \\
                                \lambda_{m}^{m-1} & \lambda_{m}^{m-2} & \cdots & 1
                            \end{matrix}\right], \label{thm4}
    \end{equation}
and $\mathbf{W}$ is defined in similar way as \eqref{thm4}, with $\alpha_i$'s instead of $\lambda_i$'s.
To find the CDF of the largest eigenvalue $\alpha_m$, we will have to integrate \eqref{thm3} over region $D = \left[0<\alpha_{1}<\ldots<\alpha_{m}<\eta\right]$. Hence, the CDF will be given by
    \begin{align*}
           \text{Pr}&(\alpha_m \leq \eta) = \frac{c}{|\mathbf{V}|} \int_{D} \left|\bm{\Lambda}(\alpha,\lambda)\right| \left|\mathbf{W}\right|\prod_{i=1}^{m}\alpha_i^{n-m} \prod_{i=1}^{m}d\alpha_i\numberthis\label{thm5}
    \end{align*}
    Let $I = \left|\bm{\Lambda}(\alpha,\lambda)\right|\left|\mathbf{W}\right| \prod_{i=1}^{m}\alpha_i^{n-m}$ be the integrand function in \eqref{thm5}.
    The above product of two determinants can be written as
    \begin{align*}
        &\left|\bm{\Lambda}(\alpha,\lambda)\right| \left|\mathbf{W}\right| = \sum_{(\sigma_1, \dots, \sigma_m)} \sum_{(i_1,\dots, i_m)} \operatorname{sign}(i_1, \dots, i_m) \times\\
            &   \alpha_{\sigma_1}^{m-1} \, _{\text{0}}F_{\text{0}}(-\lambda_{i_1}\alpha_{\sigma_1}) 
             \alpha_{\sigma_1}^{m-2} \, _{\text{0}}F_{\text{0}}(-\lambda_{i_2}\alpha_{\sigma_2}) \dots  _{\text{0}}F_{\text{0}}(-\lambda_{i_m}\alpha_{\sigma_m}). 
    \end{align*}
     Now, using this, we can simplify $I$  as
    \begin{equation*}
        I = |\bm{\Lambda^{'}}(\alpha,\lambda)|,  
    \end{equation*}
    where $\{\bm{\Lambda'}(\alpha,\lambda)\}_{ij} = \alpha_{i}^{n-i} {_{\text{0}}F_{\text{0}}}(-\lambda_{j}\alpha_{i}) $.
    Further by substituting $I$ in \eqref{thm5} we get,
    \begin{equation}
        \text{Pr}(\alpha_m \leq \eta) = \frac{c}{|\mathbf{V}|} \int_{D} \left|\bm{\Lambda'}(\alpha,\lambda) \right|\prod_{i=1}^{m}d\alpha_i.\nonumber
    \end{equation}
    Next, by applying \cite[Lemma 1]{Khatri1969LEMMA}, we can write
    \begin{equation}
        \text{Pr}(\alpha_m \leq \eta) = \frac{c}{|\mathbf{V}|} \left|\int_{0}^{\eta}\bm{\Lambda}(y,\lambda) dy\right|, \label{thm15}
    \end{equation}
    where 
        $\{\bm{\Lambda}(y,\lambda)\}_{ij} = y^{n-i} {_{\text{0}}F_{\text{0}}(-\lambda_j\alpha_i)}$.
    Note that the $_{\text{0}}F_{\text{0}}$ is a special case hypergeometric function which can be written as
    \begin{equation*}
        _{\text{0}}F_{\text{0}}(-\lambda_j\alpha_i) = e^{-y\lambda_j}.\label{thm9}
    \end{equation*}
    Finally, applying the above identity, we can simplify the integral in \eqref{thm15} as
    \begin{equation}
    \begin{aligned}
        \int_{0}^{\eta}\{\bm{\Lambda}(y,\lambda)\}_{ij} dy&= \int_{0}^{\eta} y^{n-i} e^{-y\lambda_j} dy,
    \end{aligned}
    \end{equation}
    which will give us \eqref{thm10} and further substituting it into (\ref{thm15}) will complete the proof.


\begin{thebibliography}{00}
\bibitem{Qingqing_TowardsSmart} Q. Wu and R. Zhang, "Towards Smart and Reconfigurable Environment: Intelligent Reflecting Surface Aided Wireless Network," in IEEE Communications Magazine, vol. 58, no. 1, pp. 106-112, January 2020.
\bibitem{Basar} E. Basar, M. Di Renzo, J. De Rosny, M. Debbah, M. -S. Alouini and R. Zhang, "Wireless Communications Through Reconfigurable Intelligent Surfaces," in IEEE Access, vol. 7, pp. 116753-116773, 2019.
\bibitem{Emil_SigPro} E. Björnson, H. Wymeersch, B. Matthiesen, P. Popovski, L. Sanguinetti and E. de Carvalho, "Reconfigurable Intelligent Surfaces: A signal processing perspective with wireless applications," in IEEE Signal Processing Magazine, vol. 39, no. 2, pp. 135-158, March 2022.
\bibitem{Ozdogan} Ö. Özdogan, E. Björnson and E. G. Larsson, "Intelligent Reflecting Surfaces: Physics, Propagation, and Pathloss Modeling," in IEEE Wireless Communications Letters, vol. 9, no. 5, pp. 581-585, May 2020.
\bibitem{Renzo_SmartRadio} M. Di Renzo et al., "Smart Radio Environments Empowered by Reconfigurable Intelligent Surfaces: How It Works, State of Research, and The Road Ahead," in IEEE Journal on Selected Areas in Communications, vol. 38, no. 11, pp. 2450-2525, Nov. 2020.
\bibitem{Yuan2021RIS_CR} Y. -C. Liang, Y. Zeng, E. C. Y. Peh and A. T. Hoang, "Sensing-Throughput Tradeoff for Cognitive Radio Networks," in IEEE Transactions on Wireless Communications, vol. 7, no. 4, pp. 1326-1337, April 2008.
\bibitem{Wu2021EnergyDetection} W. Wu et al., "IRS-Enhanced Energy Detection for Spectrum Sensing in Cognitive Radio Networks," in IEEE Wireless Communications Letters, vol. 10, no. 10, pp. 2254-2258, Oct. 2021.
\bibitem{Xie2023EnhancingPassiveorActiveSensing} Xie, H., Li, D., and Gu, B., “Enhancing Spectrum Sensing via Reconfigurable Intelligent Surfaces: Passive or Active Sensing and How Many Reflecting Elements are Needed?”, arXiv e-prints, 2023.
\bibitem{Shaoe2022} S. Lin, B. Zheng, F. Chen and R. Zhang, "Intelligent Reflecting Surface-Aided Spectrum Sensing for Cognitive Radio," in IEEE Wireless Communications Letters, vol. 11, no. 5, pp. 928-932, May 2022.
\bibitem{Jungang2022} J. Ge, Y. -C. Liang, S. Li and Z. Bai, "RIS-Enhanced Spectrum Sensing: How Many Reflecting Elements are Required to Achieve a Detection Probability Close to 1?," in IEEE Transactions on Wireless Communications, vol. 21, no. 10, pp. 8600-8615, Oct. 2022.
\bibitem{Sharma2014MEDCorrelatedNoise_TracyWidom} S. K. Sharma, S. Chatzinotas and B. Ottersten, "Maximum Eigenvalue detection for spectrum sensing under correlated noise," 2014 IEEE International Conference on Acoustics, Speech and Signal Processing (ICASSP), Florence, Italy, 2014.
\bibitem{Zeng2008MED} Y. Zeng, C. L. Koh and Y. . -C. Liang, "Maximum Eigenvalue Detection: Theory and Application," 2008 IEEE International Conference on Communications, Beijing, China, 2008.
\bibitem{ge2021large} Ge, J., Liang, Y.-C., Bai, Z., and Pan, G., “Large-Dimensional Random Matrix Theory and Its Applications in Deep Learning and Wireless Communications”, arXiv e-prints, 2021.
\bibitem{WangPeilan2020Channelarrayresponse} P. Wang, J. Fang, H. Duan and H. Li, "Compressed Channel Estimation for Intelligent Reflecting Surface-Assisted Millimeter Wave Systems," in IEEE Signal Processing Letters, vol. 27, pp. 905-909, 2020.
\bibitem{Emil_CovMat} E. Björnson and L. Sanguinetti, "Rayleigh Fading Modeling and Channel Hardening for Reconfigurable Intelligent Surfaces," in IEEE Wireless Communications Letters, vol. 10, no. 4, pp. 830-834, April 2021.
\bibitem{kay2009fundamentals} S. M. Kay, Fundamentals of statistical processing, Volume 2: Detection theory. Pearson Education India, 2009.
\bibitem{Taricco} G. Taricco, "Asymptotic Mutual Information Statistics of Separately Correlated Rician Fading MIMO Channels," in IEEE Transactions on Information Theory, vol. 54, no. 8, pp. 3490-3504, Aug. 2008.
\bibitem{kota2023statistically} K. K. Kota, M. S. S. Manasa, P. D. Mankar and H. S. Dhillon, "Statistically Optimal Beamforming and Ergodic Capacity for RIS-Aided MISO Systems," in IEEE Access, vol. 12, pp. 10699-10717, 2024.
\bibitem{Khatri} Khatri, C. G. “Distribution of the Largest or the Smallest Characteristic Root Under Null Hypothesis Concerning Complex Multivariate Normal Populations.” The Annals of Mathematical Statistics, vol. 35, no. 4, pp. 1807–10, 1964.
\bibitem{Chiani_2014} Chiani, M., “Distribution of the largest eigenvalue for real Wishart and Gaussian random matrices and a simple approximation for the Tracy-Widom distribution”, arXiv e-prints, 2012.
\bibitem{James1964} James, Alan T. "Distributions of matrix variates and latent roots derived from normal samples." The Annals of Mathematical Statistics 35.2 (1964): 475-501.
\bibitem{GROSS1989} Kenneth I Gross, Donald St.P Richards,
"Total positivity, spherical series, and hypergeometric functions of matrix argument." Journal of Approximation Theory, vol. 59, no. 2, pp. 224–246, 1989.
\bibitem{Khatri1969LEMMA} Khatri, C.G. "Non-central distributions ofith largest characteristic roots of three matrices concerning complex multivariate normal populations." Annals of the Institute of Statistical Mathematics 21, 23–32, 1969.

\end{thebibliography}
\end{document}